\pgfplotsset{compat=newest}
\definecolor{azrablue}{RGB}{0, 103, 180}
\definecolor{gr}{RGB}{85, 139, 47}
\definecolor{pr}{RGB}{128, 0, 128} 
\definecolor{or}{RGB}{255, 165, 0} 
\newtheorem{assumption}{Assumption}
\newcommand{\AB}[1]{\textcolor{black}{#1}}
\newcommand{\DCH}[1]{\textcolor{black}{#1}}
\title[Back to Base: Learning using Safe Resets with Reach-Avoid Safety Filters]{Back to Base: Towards Hands-Off Learning via Safe Resets with Reach-Avoid Safety Filters}
\author{%
 \Name{Azra Begzadi\'c$^{*}$} \Email{abegzadic@ucsd.edu} \vspace{-2.3mm}
 \AND
 \Name{Nikhil Uday Shinde$^{*}$} \Email{nshinde@ucsd.edu} \vspace{-2.3mm}
 \AND
  \Name{Sander Tonkens$^{*}$} \Email{stonkens@ucsd.edu} \vspace{-2.3mm}
 \AND
 \Name{Dylan Hirsch} \Email{dhirsch@uscd.edu} \vspace{-2.3mm}
 \AND
 \Name{Kaleb Ugalde} \Email{kugalde@ucsd.edu} \vspace{-2.3mm}
 \AND
 \Name{Michael C. Yip} \Email{yip@ucsd.edu} \vspace{-2.3mm}
 \AND
 \Name{Jorge Cortés} \Email{cortes@ucsd.edu} \vspace{-2.3mm}
 \AND
 \Name{Sylvia Herbert} \Email{sherbert@ucsd.edu} \vspace{-2mm}
 \AND
\addr{Contextual Robotics Institute, University of California San Diego \\ \vspace{-0.7mm} $^*$ These authors contributed equally to this work.}
}
\begin{document}
\setlength{\abovedisplayskip}{6pt}
\setlength{\belowdisplayskip}{6pt}
\maketitle
\begin{abstract}%
Designing controllers that accomplish tasks while guaranteeing safety constraints remains a significant challenge. We often want an agent to perform well in a nominal task, such as environment exploration, while ensuring it can avoid unsafe states and return to a desired target by a specific time. In particular we are motivated by the setting of safe, efficient, hands-off training for reinforcement learning in the real world. 
By enabling a robot to safely and autonomously reset to a desired region (e.g., charging stations) without human intervention, we can enhance efficiency and facilitate training.
Safety filters, such as those based on control barrier functions, decouple safety from nominal control objectives and rigorously guarantee safety. 
Despite their success, constructing these functions for general nonlinear systems with control constraints and system uncertainties remains an open problem. This paper introduces a safety filter obtained from the value function associated with the reach-avoid problem.  
The proposed safety filter minimally modifies the nominal controller while avoiding unsafe regions and guiding the system back to the desired target set. By preserving policy performance while allowing safe resetting, we enable efficient hands-off reinforcement learning and advance the feasibility of safe training for real world robots. We demonstrate our approach using a modified version of soft actor-critic to safely train a swing-up task on a modified cartpole stabilization problem.

\end{abstract}
\begin{keywords}%
  Safety filters, Reachability analysis, Learning-based safety-critical control.
\end{keywords}
\vspace{-1mm}
\section{Introduction}
Reach-avoid problems have typically focused on reaching a target set in the state space as quickly as possible while avoiding a failure set.
One popular value function-based approach is Hamilton-Jacobi reachability (HJR) analysis~\citep{Bansal2017HamiltonJacobiRA}.
HJR encodes a differential game between the control and disturbance to the system.  The result is a value function whose level sets define the reach-avoid set (or tube): the set of states from which the target can be reached safely despite worst-case disturbance at the end of (or within) the time horizon. Additionally, the gradients of the value function provide the optimal control strategy for reaching the goal \textit{in minimum time}.

A secondary approach is through the separate use of control Lyapunov functions (CLFs) that enforce stabilizing to a desired goal, combined with control barrier functions (CBFs) that enforce safety~\citep{Ames2017ControlBF}.  
An optimization problem (which is not necessarily feasible) incorporating both constraints is solved at each iteration, forcing the control to maintain safety (via the CBF) while progressing toward the goal (via the CLF).
However, the CLF constraint only ensures \textit{exponential stabilizability} to the goal. 

In practical applications, many scenarios require safely reaching a target at or within an exact time in the future, while achieving a secondary objective such as minimizing control input. 
One such application is hybrid systems, where the system must safely reach a switching state within a given time \citep{Grandia2020MultiLayeredSF}.
Additionally, a broad range of systems require being able to return to a desired goal within a certain time, e.g., a drone has to safely return to a dock station before its battery runs out.
In this paper, we motivate our work through the application of safely training a reinforcement learning (RL) agent in the real world, either from scratch or through fine-tuning. Real world RL training entails significant risks to both the robot and its environment if safety is not properly ensured. Furthermore, the trajectory of the robot can lead to states where it cannot continue operating due to system limitations, such as running out of battery or getting stuck~\citep{zhu2020ingredients}. This often necessitates human intervention to reset the robot to a desired state from which the robot can continue learning. This can make real world training prohibitively expensive and even infeasible in certain scenarios, like search and rescue or space exploration. For example, a drone fine-tuning its flight or search policy during deployment in an unknown environment could benefit from an approach that minimally modifies the nominal policy to allow for effective learning, while ensuring that it can always safely return to a desired target set. This would allow the drone to seamlessly resume its tasks and continue learning, enhancing the efficiency of real world RL.

We propose characterizing this problem through a time-varying reach-avoid value function.
However, instead of directly applying the optimal control to safely reach the goal in minimum time, as is typically done using HJR, we view the reach-avoid tube as a time-varying safe set and use an associated CBF-like constraint on the value function to design a safety filter. This safety filter allows a user to prioritize safety while also pursuing a secondary objective (e.g. learning a performance policy using RL). Our contributions are as follows: \vspace{-2mm}
\begin{enumerate}
    \item We introduce the notion of a viscosity-based control barrier function (VB-CBF), which resembles a standard CBF but has less restrictive assumptions. For control-and disturbance affine dynamics its associated time-varying safety filter is a quadratic program.\vspace{-3mm}
    \item We prove that the HJR reach-avoid value function is a VB-CBF for the reach-avoid set, and admits a feasible control set for almost every state and time pair in the reach-avoid set. Under mild assumptions, we show this value function is also a VB-CBF for the reach-avoid tube. \vspace{-3mm}
    \item We demonstrate the effectiveness of our approach for safely training an RL agent on an enhanced cartpole environment. We successfully showcase how we can use our method to train an RL agent without the need to explicitly reset the environment. Specifically, the reach-avoid VB-CBF enables both training safely and returning to a safe initial state at the end of each episode. Both are crucial components for scalable real world RL.  \vspace{-1mm}
\end{enumerate}
\section{Related Work}
\textbf{Hamilton-Jacobi Reachability.} Many safety-critical autonomous systems tasks can be formulated as reach-avoid problems~\citep{MargellosReachAvoid, fisac2015reach}. HJR provides a rigorous framework for verifying safety and reachability in dynamical systems~\citep{Bansal2017HamiltonJacobiRA}. \AB{To address the scalability challenges of HJR, learning-based approaches have been developed~\citep{BansalTomlin2021,HsuRubies-RoyoEtAl2021, Chnevert2024SolvingRP, Chung2024GuaranteedRF}.}
However, these methods require specifying the desired task objectives as a part of their reach-avoid formulation \emph{a priori}, and are hence limited to a specific subset of tasks.
In contrast,~\citep{Yu2022ReachabilityCR} uses HJR to define a feasible set and constrain an RL agent to learn an unspecified objective constrained to this set. 
This method considers constraints only and does not enforce reaching a target region.\\
\textbf{Control barrier functions.} Safety is also often ensured through the use of CBFs~\citep{Ames2017ControlBF}. In particular, CBFs are used as safety filters by adjusting a nominal control law to ensure the system satisfies safety constraints~\citep{Wabersich2023DataDrivenSF}, resulting in a quadratic program. However, constructing a CBF and ensuring the feasibility of a CBF-based optimization problem is challenging~\citep{PM-JC:24-auto}. To tackle feasibility challenges, backup CBFs have been proposed to guarantee feasibility using a predefined backup policy to a predefined safe set~\citep{GurrietMoteEtAl2020, ChenJankovicEtAl2021}. While more broadly applicable to complex systems, specifying a backup set a priori limits the implicitly specified safety region and the backup CBF formulation introduces additional computational complexity. Deriving a CBF from HJR without a target (i.e., only avoiding unsafe states) is explored in~\cite{ChoiLeeEtAl2021} via a novel value function. Similarly, in~\cite{SoSerlinEtAl2024}, the authors provide a method for constructing CBFs using ideas from reachability analysis, but their approach is limited to fixed policies. \\
\textbf{Safe \& recoverable RL.} Safe RL is well-explored~\citep{brunke2022safe} and can be broadly categorized into framing the problem as a constrained Markov decision process~\citep{Altman} or using control-theoretic methods to restrict the action space of the agent.  
These works leverage that separating task performance from safety objectives can improve both performance and safety~\citep{ThananjeyanBalakrishnaEtAl2021}. Lyapunov-based methods, including CBFs, have been used to guarantee safety while learning for model-based~\citep{BergenkampTurchettaEtAl2017, Choi2020ReinforcementLF} and model-free RL~\citep{Cheng2019EndtoEndSR}. This has been extended to include system uncertainties and disturbances~\citep{Yousef2022, Cheng2022SafeAE}.  
However, these methods typically do not consider resetting to a desired state at the end of each episode, which is desired for autonomous, i.e., without human presence, training. In contrast, reach-avoid methods focus exclusively on reaching a desired goal state, which is often difficult to define in advance and limits the use of general-purpose RL algorithms. To address these challenges, we propose a time-varying CBF that encodes both safety (for all times) and recovery to a desired target set within the specified time.
\vspace{-1mm}
\section{Preliminaries} \label{sec:problemstatement}
Consider a system of the form
\begin{equation} \label{eq:sys}
    \dot{x} = f(x,u,d)
\end{equation}
where $x \in \mathbb{R}^n$ is the state, $u \in \mathcal{U} \subseteq \mathbb{R}^p$ is the control input, $d \in \mathcal{D} \subseteq \mathbb{R}^q$ is the disturbance, with convex and compact sets $\mathcal{U}$ and $\mathcal{D}$.  
For each initial time $t \le 0$, we denote the sets of admissible control and disturbance signals by $\mathbb{U}(t) := \{\bm{u} : [t, 0] \to \mathcal{U} \mid \bm{u} \text{ is measurable}\}$ and $\mathbb{D}(t):=\{\bm{d} : [t, 0] \to \mathcal{D} \mid \bm{d} \text{ is measurable}\}$, respectively.
Throughout this work, we make the following assumption on the dynamics.
\begin{assumption} \label{as:flowfield}
    The function $f: \mathbb{R}^n \times \mathcal{U} \times \mathcal{D} \to \mathbb{R}^n$ is bounded and globally Lipschitz.
\end{assumption}
It follows from Assumption \ref{as:flowfield} that for each $t \le 0$, $x \in \mathbb{R}^n$, $\bm{u} \in \mathbb{U}(t)$, and $\bm{d} \in \mathbb{D}(t)$, there exists a unique (Carath\'{e}odory) solution $\bm{x}:[t,0] \to \mathbb{R}^n$ of \eqref{eq:sys} which satisfies $\bm{x}(t) = x$~\citep{Coddington1955TheoryOO}. We denote this solution by $\xi_{x, t}^{\bm{u}, \bm{d}}$. \vspace{-1mm}
\subsection{Hamilton-Jacobi Reachability}
\AB{HJR determines the set of initial states from which a system can robustly reach a goal while avoiding failure states.} This analysis is formulated in terms of a differential game played over the dynamics \eqref{eq:sys}, where we consider the control $u$ and disturbance $d$ as the players of the game \citep{Bansal2017HamiltonJacobiRA}.
Player $u$ wishes to ensure the system enters the target set $\mathcal{T}\subseteq\mathbb{R}^n$ by some final time (which we shall choose to be $0$), while avoiding the failure set $\mathcal{F}\subseteq{R}^n$ in the process, and player $d$ wishes for the opposite.
Given an initial time $t \le 0$, the reach-avoid tube $\mathcal{RA} (\mathcal{T}\! , \mathcal{F}\! , t)$ represents the set of initial states $x$ for which the control can win the game.
More precisely, we have
\begin{equation}\label{eq:reach_avoid_set}
    \hspace{-.2cm}\mathcal{RA} (\!\mathcal{T}\!, \mathcal{F}\! , t\!)  \! :=\!  \{ \!  x \!  \in \!  \mathbb{R}^n \!  \mid \!   \forall \lambda \! \in  \!  \!  \Lambda(t)\!, \exists \mathbf{u} \! \in \! \mathbb{U}(t), \exists s \! \in \!  [t,\! 0], \xi_{x, t}^{\bm{u}, \lambda[\bm{u}]}(\!s\!) \!  \in \!  \mathcal{T} \!\land \!\forall  \tau \!  \in \! [t, \! s], \xi_{x, t}^{\bm{u}, \lambda[\bm{u}]}(\!\tau\!) \!  \notin \!  \mathcal{F} \! \},\!
\end{equation}
where $\Lambda(t)$ represents the set of non-anticipative (i.e., causal) strategies from which we permit the disturbance player to choose. Formally, we have
\begin{align}
    \Lambda(t) \! =\!  \{&\lambda:\mathbb{U}(t) \to \mathbb{D}(t) \! \mid \! \forall s \in [t,0], \forall \bm{u}, \hat{\bm{u}} \in \mathbb{U}(t),\nonumber\\
    &\bm{u}(\tau) \! = \!  \hat{\bm{u}}(\tau) \text{ a.e. } \tau \in [t,s] \implies \lambda[\bm{u}](\tau) \!  = \!  \lambda[\hat{\bm{u}}](\tau)\text{ a.e. } \tau \in [t,s]\} .
\end{align}
Though we are mainly interested in the tube, it will be helpful to consider the reach avoid set
\begin{equation}\label{eq:reach_avoid_set_t}
    \mathcal{RA}_0 (\mathcal{T}\! , \mathcal{F}\! , t)  \! :=\!  \{ x \!  \in \!  \mathbb{R}^n \!  \mid \! \forall \lambda \! \in \!  \Lambda(t),\exists \mathbf{u} \! \in \! \mathbb{U}(t),\xi_{x, t}^{\bm{u}, \lambda[\bm{u}]}(0) \!  \in \!  \mathcal{T} \land \forall \tau \! \in \! [t, 0], \xi_{x, t}^{\bm{u}, \lambda[\bm{u}]}(\tau) \!  \notin \!  \mathcal{F} \},
\end{equation}
which corresponds to a game in which the controller wishes to ensure the state is in the target set $\textit{at}$ the final time rather than $\textit{by}$ the final time. By definition, $\mathcal{RA}_0 (\mathcal{T}\! , \mathcal{F}\! , t)\subseteq\mathcal{RA} (\mathcal{T}\! , \mathcal{F}\! , t)$. To compute $\mathcal{RA}_0 (\mathcal{T}\! , \mathcal{F}\! , t)$, one first computes the value function $V_0: \mathbb{R}^n \times (-\infty,0] \to \mathbb{R}$ associated with the game. We define the target and constraint functions $\ell(x),  g(x):\mathbb{R}^n\to\mathbb{R}$ such that $\ell(x) \geq 0 \iff x \in \mathcal{T}$ and $g(x) < 0 \iff x\in \mathcal{F}$. Then, $V_0$ is the unique viscosity solution \DCH{(for details on viscosity solutions, we refer the reader to \cite{bardi})} of the following variational inequality \citep{MargellosReachAvoid, fisac2015reach}
\begin{equation}\label{eq:reach_avoid_VI_fixed_t}
\begin{cases}
    0 = \min\{g(x) - V_0(x,t), \frac{\partial}{\partial t} V_0(x,t) + H(\nabla V_0(x,t), x)\} \quad \text{ in } x \in \mathbb{R}^n, t<0\\
    V_0(x,0)=\min\{\ell(x), g(x)\} \quad \text{ on } x \in \mathbb{R}^n,
\end{cases}
\end{equation}
where the Hamiltonian $H: \mathbb{R}^n \times \mathbb{R}^n \to \mathbb{R}$ is given by $H(\lambda,x) = \max_{u \in \mathcal{U}}\min_{d \in \mathcal{D}}\lambda^\top f(x,u,d)$.
Then, we have the following relationship
\begin{equation}\label{eq:level-set-theorem}
    V_0(x,t) \geq 0 \iff x \in \mathcal{RA}_0(\mathcal{T}, \mathcal{F}, t).
\end{equation}
Moreover, the gradient of the value function $V_0(x,t)$ informs the optimal control law $u^*(x,t)$ 
\begin{equation}
    u^*(x,t) = \arg\max_{u\in\mathcal{U}} \min_{d\in\mathcal{D}} \nabla V_0(x,t)^\top f(x,u,d).
\end{equation}
Following this control law ensures that any state $x$ at time $t$ such that $V_0(x,t)\geq 0$ reaches the target while avoiding the failure set, despite the best effort from player~$d$.
\vspace{-1mm}
\subsection{Control Barrier Functions}
For consistency with the original work \citep{AmesXuEtAl2016}, we introduce CBFs in the context of control-affine dynamics with no disturbances, i.e.
\begin{equation} \label{eq:sys_cbf}
    \dot{x} = f_0(x) + g_0(x)u,
\end{equation}
where the functions $f_0 \!:\! \mathbb{R}^n \! \to \! \mathbb{R}^n$ and $g_0 \!:\! \mathbb{R}^n \! \to \! \mathbb{R}^{n \times p}$ are globally Lipschitz. 
Note that this formalism has been extended to more general dynamics, such as \eqref{eq:sys}, with the key ideas unchanged~\citep{Kolathaya2018InputtoStateSW}. Moreover, the system is considered safe as long as the state remains within a safe set $\mathcal{C} \subseteq \mathbb{R}^n$, defined as the zero-superlevel set of a continuously differentiable function $h : \mathbb{R}^n\to \mathbb{R}$. To ensure the safety, we introduce concept of CBFs, as defined in the following.
\begin{definition} \label{def:cbf}
    (\textbf{Control Barrier Functions}, \cite{AmesXuEtAl2016}) 
    A continuously differentiable function $h: \mathbb{R}^n \to \mathbb{R}$ such that $\mathcal{C} = \{x \in \mathbb{R}^n \mid h(x) \ge 0\}$
    is a control barrier function for \eqref{eq:sys_cbf} on $\mathcal{C}$ if there is an extended class $\mathcal{K}$ function\footnote{A function $\alpha: \mathbb{R} \to \mathbb{R}$ is said to be extended class $\mathcal{K}$ if $\alpha$ is continuous, strictly increasing, and satisfies $\alpha(0) = 0$.} $\alpha$ such that, for each $x\in \mathcal{C}$, there exists a control $u \in \mathcal{U}$ satisfying
    \begin{equation} \label{eq:cbf_gp}
        \nabla_{x} h(x)^\top (f_0(x)+g_0(x)u)+\alpha(h(x)) \geq 0.
    \end{equation}
\end{definition}
If one can identify a CBF for a system, the following result provides the desired safety guarantee:
\begin{theorem}(\cite{AmesXuEtAl2016}) If $h$ is a CBF for \eqref{eq:sys_cbf} on $\mathcal{C}$, and if $\nabla h(x) \ne 0$ for all $x \in \partial C$, then the set $\mathcal{C}$ is safe under any globally Lipschitz controller $u:\mathbb{R}^n \to \mathcal{U}$ for which \eqref{eq:cbf_gp} is satisfied with $u = u(x)$ at each $x \in \mathbb{R}^n$.
\end{theorem}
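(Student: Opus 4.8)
The plan is to establish that $\mathcal{C}$ is forward invariant under the closed-loop vector field $f_{\mathrm{cl}}(x) := f_0(x) + g_0(x)u(x)$, which is exactly the notion of safety in play. \emph{Step 1 (well-posedness).} Since $f_0$ and $g_0$ are globally Lipschitz and $u$ is globally Lipschitz with values in the compact, hence bounded, set $\mathcal{U}$, the field $f_{\mathrm{cl}}$ is locally Lipschitz, so each $x_0 \in \mathbb{R}^n$ admits a unique maximal solution $\xi$ with $\xi(0) = x_0$ on an interval $[0, T_{\mathrm{max}})$. The invariance argument below runs on this interval; completeness ($T_{\mathrm{max}} = \infty$) then follows from boundedness of $\mathcal{C}$ when it applies, or from a linear-growth estimate, and is not needed to conclude safety.

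\emph{Step 2 (non-degeneracy of $\mathcal{C}$).} The hypothesis $\nabla h(x) \ne 0$ for all $x \in \partial\mathcal{C}$, together with $h \in C^1$ and $\mathcal{C} = \{x : h(x) \ge 0\}$, implies via the implicit function theorem that $\partial\mathcal{C} = \{x : h(x) = 0\}$ and $\mathrm{Int}\,\mathcal{C} = \{x : h(x) > 0\}$, so $\mathcal{C}$ is the closure of its interior. This rules out degenerate sets (e.g., an isolated point) and is what makes ``the state stays in $\mathcal{C}$'' the intended notion; it is not used in Step 3.

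\emph{Step 3 (comparison argument, the crux).} Fix $x_0 \in \mathcal{C}$ and set $\eta(t) := h(\xi(t))$. Since $h \in C^1$ and $\xi$ is absolutely continuous, $\eta$ is absolutely continuous and, for a.e.\ $t \in [0, T_{\mathrm{max}})$,
\begin{equation*}
\dot\eta(t) = \nabla h(\xi(t))^\top \bigl( f_0(\xi(t)) + g_0(\xi(t)) u(\xi(t)) \bigr) \ge -\alpha\bigl(\eta(t)\bigr),
\end{equation*}
where the inequality is \eqref{eq:cbf_gp} evaluated at $x = \xi(t)$. Let $z$ solve the scalar comparison equation $\dot z = -\alpha(z)$, $z(0) = \eta(0) = h(x_0) \ge 0$. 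As $\alpha$ is extended class $\mathcal{K}$ with $\alpha(0) = 0$, the constant $0$ is an equilibrium, and a solution starting at $z(0) \ge 0$ stays nonnegative ($\dot z < 0$ only while $z > 0$, and $\dot z > 0$ whenever $z < 0$). The comparison lemma then yields $\eta(t) \ge z(t) \ge 0$ on $[0, T_{\mathrm{max}})$, i.e.\ $h(\xi(t)) \ge 0$, i.e.\ $\xi(t) \in \mathcal{C}$. Since $x_0 \in \mathcal{C}$ was arbitrary, $\mathcal{C}$ is forward invariant and therefore safe.

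\emph{Main obstacle.} The delicate point is the comparison step: $\alpha$ is only class $\mathcal{K}$, not locally Lipschitz, so $\dot z = -\alpha(z)$ need not have unique solutions. This is handled by working with the maximal (or minimal) solution of the comparison equation and the fact that the equilibrium $0$ cannot be crossed from above, or by under-approximating $\alpha$ by locally Lipschitz class-$\mathcal{K}$ functions and passing to the limit. The supporting facts---absolute continuity and the chain rule for $h \circ \xi$ along Carath\'{e}odory solutions, and the validity of \eqref{eq:cbf_gp} all along the trajectory (assumed at every state, with the trajectory a posteriori confined to $\mathcal{C}$)---are routine.
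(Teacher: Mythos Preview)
The paper does not supply its own proof of this theorem; it is stated as a cited background result from Ames, Xu, Grizzle, and Tabuada, with no accompanying argument. Your proposal is correct and is essentially the standard proof from that source: well-posedness of the closed loop, then the differential inequality $\dot\eta \ge -\alpha(\eta)$ for $\eta(t) = h(\xi(t))$, followed by a comparison with $\dot z = -\alpha(z)$ to conclude $\eta \ge 0$. Your treatment of the potentially non-Lipschitz $\alpha$ is apt; in fact the direct contradiction argument---if $\eta$ first reaches zero at $t_0$ and is negative on a right neighborhood, then $\dot\eta \ge -\alpha(\eta) > 0$ there, which is impossible---bypasses the comparison lemma entirely and makes clear that only strict monotonicity of $\alpha$ at $0$ is needed. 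Your observation that the gradient condition on $\partial\mathcal{C}$ plays no role in Step~3 is also correct; in the original reference it is used to identify $\partial\mathcal{C}$ with $\{h = 0\}$ and to invoke Nagumo-type results, but the comparison route does not require it.
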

Given any nominal control law $u_\text{nom}: \mathbb{R}^n \to \mathbb{R}^p$ that might violate control limits and safety, CBFs can be used to minimally adjust the nominal control input with the following optimization problem: 
\begin{equation}\label{eq:qp}
\begin{aligned}
u^*(x)= &\arg \min _{u \in \mathcal{U}}\left\|u - u_{\text {nom }}(x)\right\|_2^2 \\
&\text { s.t. } \nabla_{x} h(x)^\top (f_0(x)+g_0(x)u)+\alpha(h(x)) \geq 0.
\end{aligned}
\end{equation}
Note that because the dynamics \eqref{eq:sys_cbf} are control-affine, then equation \eqref{eq:qp} is a quadratic program, enabling real-time safety filtering.
\vspace{-1mm}
\section{Safe \& Reset-Friendly Learning via Viscosity-Based Control Barrier Functions} \label{sec:main_method}
Joint safety, defined as avoiding failure states, and liveness, defined as reaching a target, are  commonly addressed through reach-avoid HJR or combined CLF-CBF approaches. While HJR ensures safety and liveness, it lacks a framework for minimally modifying a nominal controller in a smooth CBF-like fashion. In contrast, the CLF-CBF approach facilitates the construction of a safety filter, but ensuring feasibility becomes challenging in the presence of disturbances and control bounds.
As such, both of these methods are ill-suited to prioritizing finite-time safety and liveness for a system while executing a nominal controller for a secondary objective. 
Such a need arises, for example, when one wishes for a system to explore an environment while ensuring it can avoid failure states and return to a safe reset position (e.g., a charging station) within some time.

In this section, we develop a framework that robustly and safely reaches a goal within a desired time, while optimizing online for different performance objectives over the time horizon. In Section~\ref{sec:VB-CBF_set}, we develop a generalization of a CBF for reach-avoid sets that addresses safety and liveness under disturbances. The extension of these results to reach-avoid tubes is presented in Section~\ref{sec:VB-CBF_tube}. The design of a safety filter using our findings is covered in Section~\ref{sec:safety-filter}.
\vspace{-1mm}
\subsection{Viscosity-Based Control Barrier Functions for Reach-Avoid Sets} \label{sec:VB-CBF_set}
We consider a time-varying safe set $\mathcal{C}_v: (-\infty, 0] \to 2^{\mathbb{R}^n}$ that captures both finite-time safety and liveness. To achieve this objective, we introduce viscosity-based CBF (VB-CBFs) as follows.
\begin{definition}(\textbf{Viscosity-Based Control Barrier Function}) \label{def:ra-cbvf}
Consider a continuous function $h_v: \mathbb{R}^n \times (-\infty, 0] \to \mathbb{R}$, and for each $t \le 0$, let $\mathcal{C}_v(t) = \{ x \in \mathbb{R}^n \mid h_v(x,t) \ge 0 \}$. Then $h_v$ is a viscosity-based control barrier function (VB-CBF) for system \eqref{eq:sys} on $\mathcal{C}_v(\cdot)$ if there exists an extended class $\mathcal{K}$ function $\alpha$ such that for all $t < 0$ and all $x \in \mathcal{C}_v(t)$, the inequality
    \begin{equation} \label{eq:vb-cbf}
       \frac{\partial}{\partial t} h_v(x,t) + \max_{u \in \mathcal{U}} \min_{d \in \mathcal{D}} \nabla_x h_v(x, t)^\top f(x, u, d)  \geq - \alpha(h_v(x, t))
    \end{equation}
holds in a viscosity sense, i.e., for each continuously differentiable $\psi: \mathbb{R}^n \times (-\infty,0) \to \mathbb{R}$
\begin{equation} \label{eq:vb-cbf_psi}
       \frac{\partial}{\partial t} \psi(x,t) + \max_{u \in \mathcal{U}} \min_{d \in \mathcal{D}} \nabla_x \psi(x, t)^\top f(x, u, d)  \geq - \alpha(h_v(x, t))
    \end{equation}
 at each $(x,t)$ where $h_v - \psi$ has a local minimum.
\end{definition}
Unlike a traditional CBF, which satisfies the global safety condition~\eqref{eq:cbf_gp}, a VB-CBF satisfies the local safety condition \eqref{eq:vb-cbf_psi} anywhere such a $\psi$ exists (in particular anywhere $h_v$ is differentiable; see Lemma 1.7 in \cite{bardi}). Recall that the value function $V_0$ is the viscosity solution to the variational inequality \eqref{eq:reach_avoid_VI_fixed_t}.
The following result formalizes the connection between VB-CBF from Definition \ref{def:ra-cbvf} and the time-varying reach-avoid safety problem~\eqref{eq:reach_avoid_set_t}.
\begin{theorem} \label{th:main}
The value function $V_0:\mathbb{R}^n \times (-\infty,0] \to \mathbb{R}$ is a VB-CBF for system \eqref{eq:sys} on the reach-avoid set $\mathcal{RA}_0(\mathcal{T},\mathcal{F},\cdot)$.
\end{theorem}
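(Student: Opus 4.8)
The plan is to reduce the claim to the viscosity supersolution property of the value function $V_0$ for the variational inequality \eqref{eq:reach_avoid_VI_fixed_t}. First I would fix the time-varying safe set: by the level-set identity \eqref{eq:level-set-theorem}, for each $t \le 0$ one has $\mathcal{RA}_0(\mathcal{T},\mathcal{F},t) = \{x \in \mathbb{R}^n \mid V_0(x,t) \ge 0\}$, so taking $h_v = V_0$ and $\mathcal{C}_v(t) = \mathcal{RA}_0(\mathcal{T},\mathcal{F},t)$ matches the structure in Definition \ref{def:ra-cbvf}, with $V_0$ continuous as a viscosity solution of \eqref{eq:reach_avoid_VI_fixed_t}. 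It then remains only to verify \eqref{eq:vb-cbf_psi} for some extended class $\mathcal{K}$ function $\alpha$.

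The core step is to unpack what it means for $V_0$ to be a viscosity solution --- hence a viscosity supersolution --- of \eqref{eq:reach_avoid_VI_fixed_t}. Fix $t < 0$, $x \in \mathcal{C}_v(t)$, and a $C^1$ test function $\psi$ such that $h_v - \psi = V_0 - \psi$ attains a local minimum at $(x,t)$. The supersolution condition involves only $\nabla_x\psi(x,t)$, $\tfrac{\partial}{\partial t}\psi(x,t)$, and the value $V_0(x,t)$, so it applies directly to $\psi$ and gives
\[
  \min\!\left\{\, g(x) - V_0(x,t),\ \frac{\partial}{\partial t}\psi(x,t) + H\!\left(\nabla_x\psi(x,t), x\right) \,\right\} \ge 0 ,
\]
and in particular $\frac{\partial}{\partial t}\psi(x,t) + H(\nabla_x\psi(x,t),x) \ge 0$. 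Since $H(\lambda,x) = \max_{u\in\mathcal{U}}\min_{d\in\mathcal{D}}\lambda^\top f(x,u,d)$, this is exactly
\[
  \frac{\partial}{\partial t}\psi(x,t) + \max_{u\in\mathcal{U}}\min_{d\in\mathcal{D}}\nabla_x\psi(x,t)^\top f(x,u,d) \ge 0 .
\]

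To conclude, I would exploit that $x \in \mathcal{C}_v(t)$ means $V_0(x,t) \ge 0$: for any extended class $\mathcal{K}$ function $\alpha$ (e.g.\ the identity), monotonicity with $\alpha(0)=0$ yields $\alpha(V_0(x,t)) \ge 0$, so $-\alpha(h_v(x,t)) = -\alpha(V_0(x,t)) \le 0$, which combined with the last display gives \eqref{eq:vb-cbf_psi} at $(x,t)$. As $\psi$ was an arbitrary $C^1$ function with $h_v - \psi$ having a local minimum at $(x,t)$, and $(x,t)$ an arbitrary pair with $t<0$ and $x \in \mathcal{C}_v(t)$, this is precisely the statement that \eqref{eq:vb-cbf} holds in the viscosity sense, i.e., $V_0$ is a VB-CBF on $\mathcal{RA}_0(\mathcal{T},\mathcal{F},\cdot)$.

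I do not expect a substantive analytic obstacle: the depth of the statement is already carried by the facts that $V_0$ solves \eqref{eq:reach_avoid_VI_fixed_t} and obeys \eqref{eq:level-set-theorem}. The care needed is bookkeeping --- aligning the ``in a viscosity sense'' clause of Definition \ref{def:ra-cbvf} exactly with the $\frac{\partial}{\partial t}+H$ branch of the viscosity supersolution inequality, checking that the domain $t<0$ of the variational inequality coincides with that of the VB-CBF definition, and observing that the $g(x)-V_0$ branch of the $\min$ is simply not used. It is worth noting that the argument in fact yields the stronger inequality with right-hand side $0$, so that \emph{any} extended class $\mathcal{K}$ function $\alpha$ is admissible; this slack is what the safety filter of Section~\ref{sec:safety-filter} will rely on.
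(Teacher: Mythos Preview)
Your proposal is correct and follows essentially the same route as the paper: invoke \eqref{eq:level-set-theorem} to identify the zero-superlevel set, then use the viscosity supersolution inequality of \eqref{eq:reach_avoid_VI_fixed_t} at a local minimum of $V_0-\psi$ to extract $\partial_t\psi + H \ge 0$, and finally observe that $V_0(x,t)\ge 0$ on $\mathcal{RA}_0$ makes $-\alpha(V_0(x,t))\le 0$ for any extended class~$\mathcal{K}$ function. Your additional remarks (that the $g(x)-V_0$ branch is unused and that any $\alpha$ works) are accurate and match the paper's argument exactly.
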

\begin{proof}
 First, observe that by \eqref{eq:level-set-theorem}, $\mathcal{RA}_0(\mathcal{T},\mathcal{F},t)$ is indeed the zero-superlevel set of $V_0(\cdot,t)$ for each $t \le 0$.
 Fix $t < 0$ and $x \in \mathcal{RA}_0(\mathcal{T},\mathcal{F},t)$.
 Let $\psi:\mathbb{R}^n \to \mathbb{R}$ be a continuously differentiable function such that $V_0 - \psi$ has a local minimum at $(x,t)$.
 Since $V_0$ is a viscosity solution of \eqref{eq:reach_avoid_VI_fixed_t}, then
\begin{equation*}
    0 \leq \min\{g(x) - V_0(x,t), \frac{\partial}{\partial t} \psi(x,t) + \max_{u \in \mathcal{U}} \min_{d \in \mathcal{D}} \nabla \psi(x,t)^\top f(x,u,d)\}.
\end{equation*}
In particular, we obtain
\begin{equation*}
        \frac{\partial}{\partial t} \psi(x,t) + \max_{u \in \mathcal{U}} \min_{d \in \mathcal{D}}\nabla \psi(x,t)^\top f(x,u,d) \geq  0 \geq -\alpha (V_0(x,t)),
\end{equation*}
for any extended class $\mathcal{K}$ function $\alpha$, where the second inequality holds because $V_0(x,t) \ge 0$.
\end{proof}
\DCH{The above theorem provides a useful method for constructing a VB-CBF on the reach-avoid set, namely computing the value function $V_0$.}
\vspace{-1mm}
\subsection{Viscosity-Based Control Barrier Functions for Reach-Avoid Tubes} \label{sec:VB-CBF_tube}
While Theorem~\ref{th:main} provides a VB-CBF for the reach-avoid set $\mathcal{RA}_0(\mathcal{T}, \mathcal{F}, t)$ (i.e., safely reaching the target \emph{at} the final time), it is often desirable to safely reach the target \emph{by} the final time.
Therefore, in this section, we extend our result from previous section to reach-avoid tubes $\mathcal{RA}(\mathcal{T}, \mathcal{F}, t)$.
\begin{definition}(\textbf{Robust Control Invariance})
    A set $\mathcal{S} \subseteq \mathbb{R}^n$ is robustly control invariant if for each $t \le 0$ and $x \in \mathcal{S}$, for all $\lambda \in \Lambda(t)$ there exists a $\bm{u} \in \mathbb{U}(t)$ such that $\xi_{x,t}^{\bm{u},\lambda[\bm{u}]}(s) \in \mathcal{S}$ for all $s \in [t,0]$.
\end{definition}
For simplicity, we will assume one can safely stay within the target set and outside the failure set once the target has been reached. 
\begin{assumption}\label{as:rcitarget}
    The set $\mathcal{T} \setminus \mathcal{F}$ is non-empty and robustly control invariant.
\end{assumption}
Note that this assumption will usually be satisfied for our intended usage, where the target serves as a resetting set (e.g., a docking station). Therefore, Assumption \ref{as:rcitarget} is not restrictive in practice.
\begin{proposition}\label{prop:tube_set}
    Under Assumption~\ref{as:rcitarget}, the fixed-time reach-avoid set is the same as the reach-avoid tube, i.e. $\mathcal{RA}_0(\mathcal{T}, \mathcal{F}, t)=\mathcal{RA}(\mathcal{T}, \mathcal{F}, t)$ for each $t \le 0$.
\end{proposition}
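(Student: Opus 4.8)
The plan is to prove the two set inclusions separately. One direction, $\mathcal{RA}_0(\mathcal{T}, \mathcal{F}, t) \subseteq \mathcal{RA}(\mathcal{T}, \mathcal{F}, t)$, is immediate from the definitions \eqref{eq:reach_avoid_set} and \eqref{eq:reach_avoid_set_t}, since reaching $\mathcal{T}$ at the final time $0$ is a special case of reaching $\mathcal{T}$ at some time $s \in [t,0]$ (take $s = 0$), and the failure-avoidance condition over $[t,0]$ is then identical. So the content is in the reverse inclusion $\mathcal{RA}(\mathcal{T}, \mathcal{F}, t) \subseteq \mathcal{RA}_0(\mathcal{T}, \mathcal{F}, t)$, which is where Assumption~\ref{as:rcitarget} is used.

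For the reverse inclusion, fix $t \le 0$ and $x \in \mathcal{RA}(\mathcal{T}, \mathcal{F}, t)$, and fix an arbitrary non-anticipative strategy $\lambda \in \Lambda(t)$. By definition of the tube, there is a control $\bm{u}_1 \in \mathbb{U}(t)$ and a time $s \in [t,0]$ such that $\xi_{x,t}^{\bm{u}_1, \lambda[\bm{u}_1]}(s) \in \mathcal{T}$ and $\xi_{x,t}^{\bm{u}_1, \lambda[\bm{u}_1]}(\tau) \notin \mathcal{F}$ for all $\tau \in [t,s]$; in particular the state at time $s$ lies in $\mathcal{T} \setminus \mathcal{F}$. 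The idea is then to invoke robust control invariance of $\mathcal{T} \setminus \mathcal{F}$ to extend the trajectory from time $s$ to time $0$ while staying in $\mathcal{T} \setminus \mathcal{F}$, thereby producing a control that wins the fixed-time game against $\lambda$. Concretely, I would apply the invariance property at the point $\xi_{x,t}^{\bm{u}_1,\lambda[\bm{u}_1]}(s)$ with initial time $s$, against the "tail" of $\lambda$ after $s$, to obtain a control $\bm{u}_2 \in \mathbb{U}(s)$ keeping the state in $\mathcal{T} \setminus \mathcal{F}$ on $[s,0]$, and then concatenate: define $\bm{u} \in \mathbb{U}(t)$ to agree with $\bm{u}_1$ on $[t,s]$ and with $\bm{u}_2$ on $[s,0]$. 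Uniqueness of Carath\'eodory solutions (Assumption~\ref{as:flowfield}) guarantees the concatenated trajectory matches $\xi_{x,t}^{\bm{u}_1,\lambda[\bm{u}_1]}$ on $[t,s]$ and the invariant extension on $[s,0]$, so it avoids $\mathcal{F}$ on all of $[t,0]$ and lands in $\mathcal{T}$ at time $0$. Since $\lambda$ was arbitrary, $x \in \mathcal{RA}_0(\mathcal{T},\mathcal{F},t)$.

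The main obstacle is the careful bookkeeping of the non-anticipative strategies across the concatenation point $s$. Robust control invariance as stated quantifies over strategies in $\Lambda(s)$, but the disturbance after time $s$ is governed by the restriction/tail of $\lambda \in \Lambda(t)$; I need to verify that this tail, appropriately reparametrized, is itself a legitimate non-anticipative strategy on $[s,0]$ so that the invariance hypothesis applies, and that the control $\bm{u}_2$ it returns can be spliced onto $\bm{u}_1$ to form a measurable (hence admissible) signal in $\mathbb{U}(t)$. A subtlety is that $s$ may depend on $\bm{u}_1$ and hence implicitly on $\lambda$, but since we have already committed to $\bm{u}_1$ and $\lambda$ this causes no circularity. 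There is also a minor degenerate case $t = 0$: then $[t,0]$ is a single point, $s = 0$, and the two definitions trivially coincide. I expect the measurability and non-anticipativity arguments to be routine but to require a few lines of care; everything else follows directly from the definitions and uniqueness of solutions.
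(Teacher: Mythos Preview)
Your proposal is correct and follows essentially the same approach as the paper: prove the easy inclusion directly from the definitions, and for the reverse inclusion concatenate the tube control on $[t,s]$ with a control obtained from robust control invariance of $\mathcal{T}\setminus\mathcal{F}$ on $[s,0]$, applied against the tail strategy $\lambda_s[\bm{u}_0] := \lambda[\langle \bm{u}_1|_{[t,s]}, \bm{u}_0\rangle]|_{[s,0]}$. The paper carries out exactly the bookkeeping you flag as the main obstacle---checking that $\lambda_s \in \Lambda(s)$ and that $\lambda$ applied to the concatenated control agrees with the concatenated disturbance---so your anticipated difficulties are real but routine, as you expected.
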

\begin{proof}
     Fix $t < 0$ (the proof is trivial when $t = 0$).
     It is clear that $\mathcal{RA}_0(\mathcal{T}, \mathcal{F}, t) \subseteq \mathcal{RA}(\mathcal{T}, \mathcal{F}, t)$.
     We prove the reverse inclusion.
     Suppose $x \in \mathcal{RA}(\mathcal{T}, \mathcal{F}, t)$.
     Let $\lambda \in \Lambda(t)$.
     By the definition of $\mathcal{RA}(\mathcal{T}, \mathcal{F}, t)$, we can choose $\bm{u} \in \mathbb{U}(t)$ and $s \in [t,0]$ such that $\xi_{x,t}^{\bm{u},\bm{d}}(s) \in \mathcal{T}$ and $\xi_{x,t}^{\bm{u},\bm{d}}(\tau) \notin \mathcal{F}$ for all $\tau \in [t,s]$, where $\bm{d} := \lambda[\bm{u}]$. Let $\bm{u}_s = \bm{u}|[t,s]$ (i.e. the restriction of $\bm{u}$ to $[t,s]$), let $\bm{d}_s = \bm{d}|[t,s]$, and let $\lambda_s:\mathbb{U}(s) \to \mathbb{D}(s)$ be given by $\lambda_s: \bm{u}_0 \mapsto \lambda(\langle\bm{u_s},\bm{u_0}\rangle)|[s,0]$,
     where $\langle\bm{u_s},\bm{u_0}\rangle$ represents the concatenation of $\bm{u_s}$ and $\bm{u_0}$.
     It can be readily checked that non-anticipativity of $\lambda_s$ follows from non-anticipativity of $\lambda$. Let $x_s = \xi_{x,t}^{\bm{u},\bm{d}}(s)$.
     By Assumption \ref{as:rcitarget}, we can choose $\bm{u}_0^* \in \mathbb{U}(s)$ such that $\xi_{x_s,s}^{\bm{u}_0^*, \bm{d}_0^*}(\tau) \in \mathcal{T} \setminus \mathcal{F}$ for all $\tau \in [s,0]$, where $\bm{d}_0^* := \lambda_s[\bm{u}_0^*]$.
     Letting $\bm{u}^* = \langle \bm{u}_s,\bm{u}_0^* \rangle$ and $\bm{d}^* = \langle \bm{d}_s,\bm{d}_0^* \rangle$, we then have by non-anticipativity of $\lambda$ and the definition of $\lambda_s$ that $\lambda[\bm{u}^*] = \bm{d}^*$. Thus $\xi_{x,t}^{\bm{u}^*, \bm{d}^*}(\tau) = \xi_{x,t}^{\bm{u}, \bm{d}}(\tau)\notin \mathcal{F}$ for all $\tau \in [t,s]$. Moreover, $\xi_{x,t}^{\bm{u}^*, \bm{d}^*}(\tau) = \xi_{x_s,s}^{\bm{u}^*_0, \bm{d}^*_0}(\tau) \in \mathcal{T} \setminus \mathcal{F}$ for all $\tau \in [s,0]$.
     In other words, $\bm{u}^* \in \mathbb{U}(t)$ is such that $\xi_{x,t}^{\bm{u}^*, \lambda[\bm{u}^*]}(0) \in \mathcal{T}$ and $\xi_{x,t}^{\bm{u}^*, \lambda[\bm{u}^*]}(\tau) \notin \mathcal{F}$ for all $\tau \in [t,0]$.
     Thus $x \in \mathcal{RA}_0(\mathcal{T},\mathcal{F},t)$, completing the proof.
 \end{proof}
Intuitively, this proposition states that so long as the system can safely remain in the target once it arrives, reaching the target by any time guarantees that the system can be in the target at the final time.
The following result is then immediate from Theorem \ref{th:main} and Proposition \ref{prop:tube_set}.
\begin{theorem}\label{th:main-2}
    Under Assumption \ref{as:rcitarget}, the value function $V_0:\mathbb{R}^n \times (-\infty,0] \to \mathbb{R}$ is a VB-CBF for system \eqref{eq:sys} on the reach-avoid tube $\mathcal{RA}(\mathcal{T},\mathcal{F},\cdot)$. 
\end{theorem}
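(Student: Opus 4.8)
The plan is to obtain Theorem~\ref{th:main-2} as an immediate corollary of Theorem~\ref{th:main} and Proposition~\ref{prop:tube_set}, with no fresh analysis of the Hamilton--Jacobi variational inequality needed. Unpacking Definition~\ref{def:ra-cbvf} with the candidate $h_v = V_0$ and the time-varying set $\mathcal{C}_v(\cdot) = \mathcal{RA}(\mathcal{T},\mathcal{F},\cdot)$, there are three things I would check: (i) $V_0$ is continuous on $\mathbb{R}^n \times (-\infty,0]$; (ii) for each $t \le 0$, the zero-superlevel set $\{x \in \mathbb{R}^n \mid V_0(x,t) \ge 0\}$ coincides with the time slice $\mathcal{RA}(\mathcal{T},\mathcal{F},t)$ of the tube; and (iii) the viscosity inequality~\eqref{eq:vb-cbf_psi} with $h_v = V_0$ holds at every $(x,t)$ with $t < 0$ and $x \in \mathcal{RA}(\mathcal{T},\mathcal{F},t)$, for a suitable extended class $\mathcal{K}$ function $\alpha$.

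For (i), continuity is immediate because $V_0$ is by construction the continuous viscosity solution of~\eqref{eq:reach_avoid_VI_fixed_t}. For (ii), I would combine the level-set identity~\eqref{eq:level-set-theorem}, which gives $\{x \mid V_0(x,t) \ge 0\} = \mathcal{RA}_0(\mathcal{T},\mathcal{F},t)$, with Proposition~\ref{prop:tube_set}, which under Assumption~\ref{as:rcitarget} upgrades the right-hand side to $\mathcal{RA}(\mathcal{T},\mathcal{F},t)$. For (iii), Theorem~\ref{th:main} has already established exactly this viscosity inequality for every $x \in \mathcal{RA}_0(\mathcal{T},\mathcal{F},t)$ and $t < 0$ (indeed for any extended class $\mathcal{K}$ function $\alpha$, since $V_0 \ge 0$ there); invoking Proposition~\ref{prop:tube_set} once more to identify $\mathcal{RA}_0(\mathcal{T},\mathcal{F},t)$ with $\mathcal{RA}(\mathcal{T},\mathcal{F},t)$ transfers the inequality to the tube slice verbatim. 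Assembling (i)--(iii) against Definition~\ref{def:ra-cbvf} then finishes the proof.

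I do not expect a genuine obstacle in this particular argument: all the real work is carried by Proposition~\ref{prop:tube_set}, whose proof is the delicate part --- it splices a control that reaches $\mathcal{T}$ by some time $s$ together with a robustly control-invariant continuation inside $\mathcal{T} \setminus \mathcal{F}$, and must verify that the disturbance player's non-anticipative strategy restricts consistently across the concatenation. Granting that, the only thing to be careful about here is the bookkeeping requirement in Definition~\ref{def:ra-cbvf} that the zero-superlevel set of $h_v$ \emph{equal} $\mathcal{C}_v(t)$ on the nose; this is precisely why both the level-set theorem and Proposition~\ref{prop:tube_set} are invoked rather than the latter alone.
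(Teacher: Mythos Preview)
Your proposal is correct and matches the paper's own argument exactly: the paper states that Theorem~\ref{th:main-2} is ``immediate from Theorem~\ref{th:main} and Proposition~\ref{prop:tube_set}'' with no further details. Your unpacking of the three Definition~\ref{def:ra-cbvf} requirements is a faithful (and slightly more explicit) rendering of that one-line justification.
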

\AB{The above theorem implies that the value function $V_0$ can be computed to construct a VB-CBF on the reach-avoid tube (rather than the reach-avoid set as before), provided that Assumption~\ref{as:rcitarget} holds.}
\vspace{-1mm}
\subsection{Safety Filter} \label{sec:safety-filter}
Given a VB-CBF $h_v$ for system \eqref{eq:sys} and corresponding extended class $\mathcal{K}$ function $\alpha$, we consider for $t \le 0$ and $x \in \mathcal{C}_v(t)$ the feasible control set
\begin{align}
    \Omega(x,t):=\{u \in \mathcal{U} \mid& \exists \psi:\mathbb{R}^n \times (-\infty,0) \to \mathbb{R} \text{ continuously differentiable s.t. } h_v - \psi \text{ has a local}\nonumber\\
    &\text{minimum at } (x,\!t),
     \frac{\partial}{\partial t} \psi(x,\!t) \!+\! \min_{d \in \mathcal{D}} \! \nabla_x  \psi(x,\! t)^\top\! f(x,\! u,\! d)  \!\geq\! - \!\alpha(h_v(x, t))\! \}. \!
\end{align}
We use the feasible control set to design a safety filter that prioritizes system safety and liveness, while also addressing a secondary objective. Specifically, given a nominal control law $u_{\mathrm{nom}}: \mathbb{R}^n \times (-\infty,0] \to \mathcal{U}$ that might violate safety or prevent safely achieving the target set, whenever $\Omega(x,t)$ is non-empty, we can minimally adjust the nominal control input to a feasible one using the following optimization problem
\begin{equation}\label{eq:qp_vb_cbf}
\begin{aligned}
u^*(x,t) =&\arg \min _{u \in \mathcal{U}}\left\|u - u_{\text {nom }}(x)\right\|_2^2 \\
&\text { s.t. }  \frac{\partial}{\partial t} \psi(x,t) +\min_{d \in \mathcal{D}} \nabla_x \psi(x, t)^\top f (x, u, d)  \geq - \alpha(h_v(x, t)),
\end{aligned}
\end{equation}
where $\psi$ is chosen as above.
When the dynamics of the system are also control and disturbance-affine, the optimization problem can then be expressed as a quadratic program (QP) that can be solved quickly and precisely online.
\begin{remark}
    \DCH{ When computing $u^*(x,t)$ online, note that if} $h_v$ is differentiable at $(x,t)$, it follows from Lemma 1.7 of \cite{bardi} that one can simply solve the optimization problem \eqref{eq:qp_vb_cbf} with $h_v$ in place of $\psi$.
    \DCH{Additionally,} since $V_0$ is Lipschitz (c.f.,~\cite{fisac2015reach}), $V_0$ is differentiable almost everywhere by Rademacher's Theorem.
    It then follows from Theorems \ref{th:main} and \ref{th:main-2}, that for $h_v = V_0$ we have that $\Omega(x,t)$ is non-empty almost everywhere in $\mathcal{RA}_0(\mathcal{T},\mathcal{F},\cdot)$ (and $\mathcal{RA}(\mathcal{T},\mathcal{F},\cdot)$ as well under Assumption \ref{as:rcitarget}).
    \DCH{These observations justify that in practice, if one uses the value function $V_0$ computed via HJR as a VB-CBF, they may compute $u^*(x,t)$ by solving the optimization problem \eqref{eq:qp_vb_cbf} with both $\psi$ and $h_v$ replaced with $V_0$.
    In this case, there will be some feasible control action at almost every $x$ and $t$ in the reach-avoid tube.} 
\end{remark}
\vspace{-1mm}
\section{Numerical Experiments}
\begin{figure}[t!]
    \centering
    \includegraphics[width=0.65\textwidth]{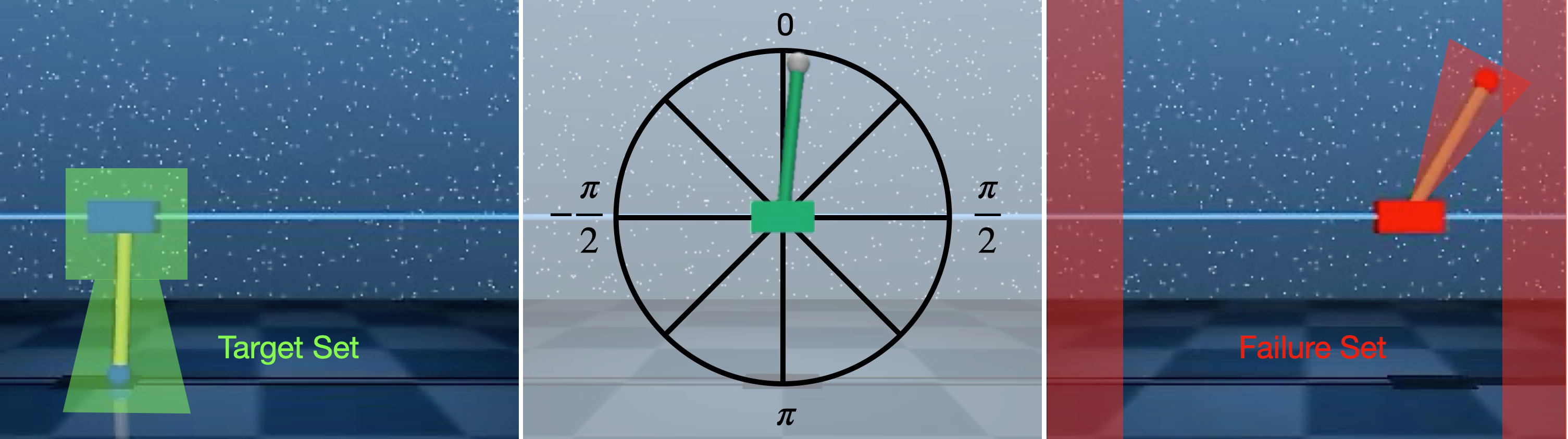}
    \caption{Our modified cartpole environment based on the Deepmind control suite. 
    The green region in the left figure depicts the target region that we desire to reach, the center image depicts the swung up position that yields the highest reward and the red regions in the right image depict unsafe regions in the state space. 
    }
    \label{fig:env}
\end{figure}

\subsection{Simulation Setup}
\textbf{Modified Cartpole Environment Setup: }We demonstrate the \hfill\hfill\hfill\hfill
\begin{wrapfigure}{r}{0.35\textwidth} 
    \centering
    \vspace{-1cm}
    \captionsetup{justification=justified,singlelinecheck=false}
    \includegraphics[width=0.3\textwidth]{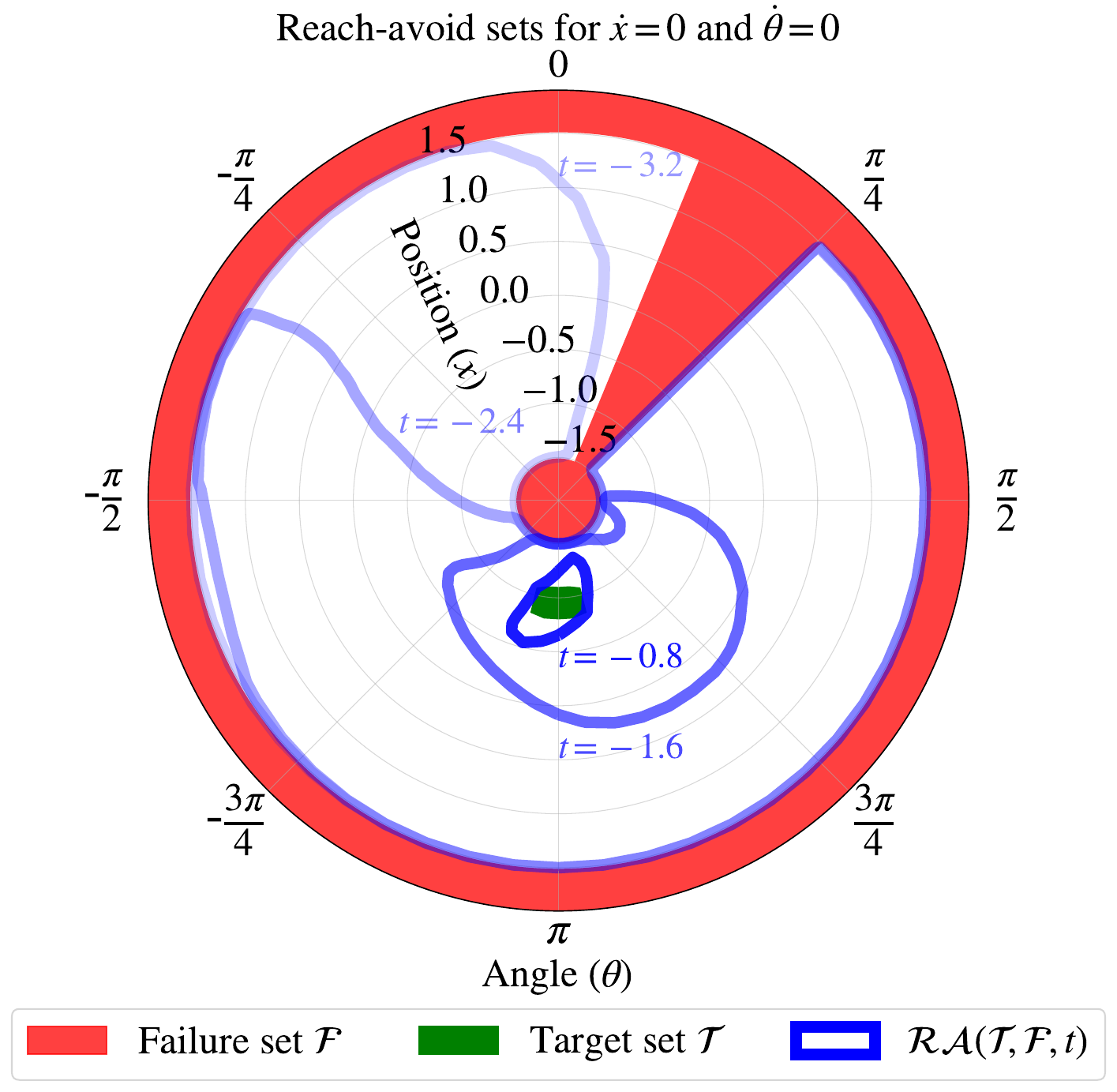} 
    \caption{Reach-avoid computation for the modified cartpole environment. The reach-avoid tubes $\mathcal{RA}(\mathcal{T}, \mathcal{F}, t)$ indicate states from which the cartpole can reach the target $\mathcal{T}$ at times $t=\{-0.8, -1.6, -2.4, -3.2\}$ seconds while avoiding the failure set $\mathcal{F}$.}
    \label{fig:sets}
\end{wrapfigure} effectiveness of our approach by considering the cartpole swing-up task~\citep{tunyasuvunakool2020}, modified such that all the mass is placed at the end of the pole, and damping and friction are removed. The state space of the cartpole system is given by $z =[x, \theta, \dot{x}, \dot{\theta}]^\top \in \mathbb{R}^4$, where $x$ is the position of the cart, $\theta$ is the pendulum angle, and $\dot{x}$ and $\dot{\theta}$ are the velocity of the cart and the angular velocity of the pendulum. The state space of environment is constrained by $x \in [-1.8, 1.8]$. For our setting, we define position $x$ as unsafe for $x < -1.5$ or $x>1.5$, while the angular position $\theta$ is unsafe for $\pi/8 < \theta < \pi/4$, visualized in Figures~\ref{fig:env} and~\ref{fig:sets} in red. These constraints limit the safe swing-up behavior of the system to the clockwise direction. These unsafe regions are not explicitly modeled in the simulator, thus safety violations do not directly impact performance. We consider the target set as a region where the robot must reset at the end of each episode. For this setup, the target region is specified as
\begin{equation*}
   \!  \mathcal{T} \!=  \!  \left\{ \!  z  \! \mid  \! x \!  \in \!  [-1.1, -0.8], \theta \in [-\pi - 0.25, \pi + 0.25], \dot{x} \in [-0.1, 0.1], \dot{\theta} \in [-0.25, 0.25] \right\}. \! 
\end{equation*} 
This target set is visualized in Figures~\ref{fig:env} and~\ref{fig:sets} in green. Terminating an episode in this target region is not straightforward. 
For our system to efficiently achieve this state, it must actively dissipate energy while simultaneously moving to the desired position, balancing precise position control with energy-reducing dynamics.
The zero-level sets of the reach-avoid tube $\mathcal{RA}(\mathcal{T}, \mathcal{F}, t)$ are shown over various times $t$ in Figure~\ref{fig:sets}.
\newline
\textbf{RL Policy Modification:} We consider the widely used RL method soft actor-critic (SAC) \citep{Haarnoja2018SoftAA}, which utilizes a maximum entropy RL objective to learn both a stochastic policy and a value function. We build on~\cite{pytorch_sac} implementation. However, to ensure safety during training, we run the output of SAC through a safety filter. By combining SAC with a reach-avoid VB-CBF safety filter we aim to learn to efficiently swing-up the cartpole to maximize reward, while avoiding safety violations and returning to the reset position (i.e. target set $\mathcal{T}$).
\begin{table}[t!]
    \centering
    \begin{tabular}{@{}lcccc@{}}
         & SAC & SAC-CBF & \textbf{SAC-RACBF} & \textbf{SAC-RACBF-noreset} \\ \midrule
        $\%$ Unsafe Trajectories & $99.437$ & $0.563$ &\textbf{$0.070$} & \textbf{$0.000$} 
    \end{tabular}
    \captionsetup{justification=centering}
    \caption{Average (over 5 seeds) percentage of unsafe trajectories during training. Our methods are bolded.}
    \label{tab:perc}
\end{table}

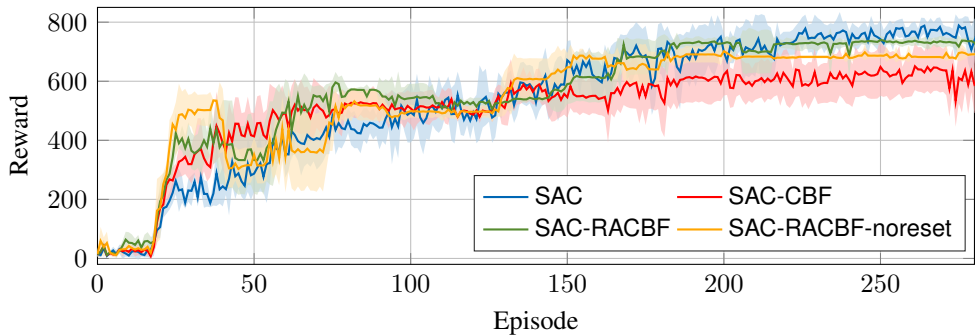
\begin{figure}[t!]
    \centering
    \definecolor{nicegreen}{RGB}{0,200,0} 
	\pgfplotsset{width=10\columnwidth /10, compat = 1.13, 
		height = 4cm, grid= major, 
		legend cell align = left, ticklabel style = {font=\small\sffamily},
		every axis label/.append style={font=\small},
		legend style = {font=\small, at={(0.99,0.05)}, anchor=south east},title style={yshift=7pt, font = \small}}

\centering
\begin{tikzpicture}
\node[name=plotvirtual] at (0,0) {\begin{tikzpicture}
\begin{axis}[
name=plot1,
grid=both, 
height = 4cm, width =\columnwidth-1cm,
xmin=0, xmax = 280,
ymin=-20, ymax=850,
xtick={0,50,100,150,200,250},
ylabel={Reward},
xlabel={Episode },
legend columns=2,
legend style={font=\footnotesize\sffamily , legend cell align=left, align=left, draw=white!15!black},
]

\addplot[thick, azrablue] table[x=t, y=x] {sac.txt}; 
\addplot [name path=lower, fill=none, draw=none, forget plot]  table[x=t, y=c1] {sac.txt}; 
\addplot [name path=upper, fill=none, draw=none, forget plot] table[x=t, y=c2] {sac.txt}; 
\addplot[azrablue!30, opacity=0.6, forget plot] fill between[of=lower and upper];

\addplot[thick, red] table[x=t, y=x] {sac-cbf.txt}; 
\addplot [name path=lower, fill=none, draw=none, forget plot]  table[x=t, y=c1] {sac-cbf.txt}; 
\addplot [name path=upper, fill=none, draw=none, forget plot] table[x=t, y=c2] {sac-cbf.txt}; 
\addplot[red!30, opacity=0.6, forget plot] fill between[of=lower and upper];

\addplot[thick, gr] table[x=t, y=x] {sac-racbf.txt}; 
\addplot [name path=lower, fill=none, draw=none, forget plot]  table[x=t, y=c1] {sac-racbf.txt};  
\addplot [name path=upper, fill=none, draw=none, forget plot] table[x=t, y=c2] {sac-racbf.txt}; 
\addplot[gr!30, opacity=0.6, forget plot] fill between[of=lower and upper];

\addplot[thick, or] table[x=t, y=x] {sac-racbf-noreset.txt}; 
\addplot [name path=lower, fill=none, draw=none, forget plot]  table[x=t, y=c1] {sac-racbf-noreset.txt}; 
\addplot [name path=upper, fill=none, draw=none, forget plot] table[x=t, y=c2] {sac-racbf-noreset.txt}; 
\addplot[or!30, opacity=0.6, forget plot] fill between[of=lower and upper];

\addlegendentry{$\text{SAC}$}
\addlegendentry{$\text{SAC-CBF}$}
\addlegendentry{$\text{SAC-RACBF}$}
\addlegendentry{$\text{SAC-RACBF-noreset}$}
\end{axis}
\end{tikzpicture}};
\end{tikzpicture}
\vspace{-5mm}
    \caption{
    The mean and standard deviation of episode rewards over 5 different seeds on the given baselines (SAC, SAC-CBF) and proposed methods (SAC-RACBF and SAC-RACBF-noreset). The proposed methods achieve reward performances closely comparable to the baselines, demonstrating how the reach-avoid VB-CBF safety filter ensures safety, guarantees a safe return to the desired target set, and minimally impacts the learning process of the SAC agent. To ensure a fair comparison, we compare the reward returned over the first $10$ seconds of the $15$ second episode, as our method requires returning to the target set, unlike the baselines.}  \label{fig:rewards}
\end{figure}

\subsection{Results and Analysis}\label{sec:results}
To evaluate the effectiveness of our approach, we compare the following methods:
\begin{enumerate} \vspace{-2mm}
\item \textbf{(Ours)} SAC with reach-avoid VB-CBF (SAC-RACBF): Our method integrates a reach-avoid VB-CBF with SAC, combining safety with a reachability objective to guide the agent. \vspace{-3mm} 
\item \textbf{(Ours)} SAC with reach-avoid VB-CBF without resetting (SAC-RACBF-noreset): A modification of the above, where the next episode starts from the final state of the previous episode.\vspace{-3mm}
\item Standard SAC (SAC): The standard SAC algorithm, which does not account for safety constraints or the objective of returning to the start state, serves as our performance baseline.\vspace{-3mm}
\item SAC with CBF (SAC-CBF): A variant of SAC enhanced with an CBF-like safety filter. Inspired by the approach in \cite{ChoiLeeEtAl2021}, we construct a CBF-like safety filter following the same methodology as SAC-RACBF, but relying on the value function obtained from solving the avoid-only problem.\vspace{-3mm}
\end{enumerate}
All methods and their results are averaged over 5 seeds. The reward graph over the course of training is  illustrated in the Figure~\ref{fig:rewards}. These showcase the rewards accrued over the first $10$ seconds of the $15$ second episode, as we aim to terminate the episode at a safe reset state in the target set (which has a low reward for the swing-up objective). The safety-filter enhanced SAC agents agents accumulate rewards comparable to standard SAC, demonstrating that the safety filter based on the proposed method is minimally invasive and allows the cartpole agent to efficiently learn how to swing up. Importantly, unlike standard SAC, our methods preserve safety during learning. 
Table~\ref{tab:perc} shows how our methods are better at maintaining safety throughout the training, while standard SAC leads to unsafe trajectories in over $99\%$ of the episodes. 
SAC-CBF provides similar levels of safety to our approaches, however, it does not provide guarantees on returning to the target set and thus terminates each episode far away from the desired safe return region. 
Figure~\ref{fig:final_delta_states} shows the minimum difference between the current state and target set, denoted by $\Delta$, for each dimension.
The graphs illustrate how our methods closely align with the desired target region throughout all episodes, while the baselines remain widely distributed across the state space.

In summary, the results in Table~\ref{tab:perc}, Figures~\ref{fig:rewards} and~\ref{fig:final_delta_states} demonstrate how our proposed method has minimal detrimental effect on performance and maintains safety, while enabling a key feature for real world RL, namely terminating an episode in a desired safe reset region.
Furthermore, the performance of the proposed method SAC-RACBF-noreset highlights its capability for hands-off RL, enabling efficient and cost-effective robot training in real world scenarios. We acknowledge that a policy trained with the reach-avoid-based safety filter will require the continued use of this safety filter during online deployment. However, the modifications to the safety filter for a policy can be incorporated in a manner that eliminates their necessity post-training~\citep{Cheng2019EndtoEndSR}.
\begin{figure}[t!]
     \centering
     \vspace{-0.3cm}
    {\text{$\Delta$ Final state from Target Region over episodes}} 
    \pgfplotsset{
        width=0.5\textwidth,
        height=4cm,
        grid=none,
        legend columns=4, 
        legend style={font=\small, at={(1.9,-1.9)}, anchor=south east, legend cell align=left},
        ylabel style={font=\small},
        xlabel style={font=\small},
        ticklabel style={font=\small}
    }

    \begin{tikzpicture}
        \begin{axis}[
            height=3.5cm, width =(\columnwidth)/2,
            xmin=0, xmax=280,
            ymin=-1, ymax=3,
            ylabel={$\Delta x$},
            xticklabels={,,}
        ]
            \addplot[thick, azrablue] table[x=t, y=x] {sac_x.txt}; 
            \addplot [name path=lower, fill=none, draw=none, forget plot]  table[x=t, y=c1] {sac_x.txt}; 
            \addplot [name path=upper, fill=none, draw=none, forget plot] table[x=t, y=c2] {sac_x.txt}; 
            \addplot[azrablue!30, opacity=0.6, forget plot] fill between[of=lower and upper];

            \addplot[thick, red] table[x=t, y=x] {sac-cbf_x.txt}; 
            \addplot [name path=lower, fill=none, draw=none, forget plot]  table[x=t, y=c1] {sac-cbf_x.txt}; 
            \addplot [name path=upper, fill=none, draw=none, forget plot] table[x=t, y=c2] {sac-cbf_x.txt}; 
            \addplot[red!30, opacity=0.6, forget plot] fill between[of=lower and upper];

            \addplot[thick, gr] table[x=t, y=x] {sac-racbf_x.txt}; 
            \addplot [name path=lower, fill=none, draw=none, forget plot]  table[x=t, y=c1] {sac-racbf_x.txt};  
            \addplot [name path=upper, fill=none, draw=none, forget plot] table[x=t, y=c2] {sac-racbf_x.txt}; 
            \addplot[gr!30, opacity=0.6, forget plot] fill between[of=lower and upper];

            \addplot[thick, or] table[x=t, y=x] {sac-racbf-noreset_x.txt}; 
            \addplot [name path=lower, fill=none, draw=none, forget plot]  table[x=t, y=c1] {sac-racbf-noreset_x.txt}; 
            \addplot [name path=upper, fill=none, draw=none, forget plot] table[x=t, y=c2] {sac-racbf-noreset_x.txt}; 
             \addplot[or!30, opacity=0.6, forget plot] fill between[of=lower and upper];

            \addlegendentry{$\text{SAC}$}
            \addlegendentry{$\text{SAC-CBF}$}
            \addlegendentry{$\text{SAC-RACBF}$}
            \addlegendentry{$\text{SAC-RACBF-noreset}$}
        \end{axis}
        
        \begin{axis}[
            height=3.5cm, width =(\columnwidth)/2,
            at={(7.5cm,0cm)},
            xmin=0, xmax=280,
            ymin=-4.5, ymax=4.5,
            ylabel={$\Delta \theta$},
            xticklabels={,,}
        ]
              \addplot[thick, azrablue] table[x=t, y=x] {sac_theta.txt}; 
            \addplot [name path=lower, fill=none, draw=none, forget plot]  table[x=t, y=c1] {sac_theta.txt}; 
            \addplot [name path=upper, fill=none, draw=none, forget plot] table[x=t, y=c2] {sac_theta.txt}; 
            \addplot[azrablue!30, opacity=0.6, forget plot] fill between[of=lower and upper];

            \addplot[thick, red] table[x=t, y=x] {sac-cbf_theta.txt}; 
            \addplot [name path=lower, fill=none, draw=none, forget plot]  table[x=t, y=c1] {sac-cbf_theta.txt}; 
            \addplot [name path=upper, fill=none, draw=none, forget plot] table[x=t, y=c2] {sac-cbf_theta.txt}; 
            \addplot[red!30, opacity=0.6, forget plot] fill between[of=lower and upper];

            \addplot[thick, gr] table[x=t, y=x] {sac-racbf_theta.txt}; 
            \addplot [name path=lower, fill=none, draw=none, forget plot]  table[x=t, y=c1] {sac-racbf_theta.txt};  
            \addplot [name path=upper, fill=none, draw=none, forget plot] table[x=t, y=c2] {sac-racbf_theta.txt}; 
            \addplot[gr!30, opacity=0.6, forget plot] fill between[of=lower and upper];

            \addplot[thick, or] table[x=t, y=x] {sac-racbf-noreset_theta.txt}; 
            \addplot [name path=lower, fill=none, draw=none, forget plot]  table[x=t, y=c1] {sac-racbf-noreset_theta.txt}; 
            \addplot [name path=upper, fill=none, draw=none, forget plot] table[x=t, y=c2] {sac-racbf-noreset_theta.txt}; 
             \addplot[or!30, opacity=0.6, forget plot] fill between[of=lower and upper];
             
        \end{axis}
        
        \begin{axis}[
            height=3.5cm, width =(\columnwidth)/2,
            at={(0cm,-2cm)},
            xmin=0, xmax=280,
            ymin=-3, ymax=3,
            xlabel={$\text{Episode}$},
            ylabel={$\Delta \dot{x}$},
        ]
            \addplot[thick, azrablue] table[x=t, y=x] {sac_xdot.txt}; 
            \addplot [name path=lower, fill=none, draw=none, forget plot]  table[x=t, y=c1] {sac_xdot.txt}; 
            \addplot [name path=upper, fill=none, draw=none, forget plot] table[x=t, y=c2] {sac_xdot.txt}; 
            \addplot[azrablue!30, opacity=0.6, forget plot] fill between[of=lower and upper];

            \addplot[thick, red] table[x=t, y=x] {sac-cbf_xdot.txt}; 
            \addplot [name path=lower, fill=none, draw=none, forget plot]  table[x=t, y=c1] {sac-cbf_xdot.txt}; 
            \addplot [name path=upper, fill=none, draw=none, forget plot] table[x=t, y=c2] {sac-cbf_xdot.txt}; 
            \addplot[red!30, opacity=0.6, forget plot] fill between[of=lower and upper];

            \addplot[thick, gr] table[x=t, y=x] {sac-racbf_xdot.txt}; 
            \addplot [name path=lower, fill=none, draw=none, forget plot]  table[x=t, y=c1] {sac-racbf_xdot.txt};  
            \addplot [name path=upper, fill=none, draw=none, forget plot] table[x=t, y=c2] {sac-racbf_xdot.txt}; 
            \addplot[gr!30, opacity=0.6, forget plot] fill between[of=lower and upper];

            \addplot[thick, or] table[x=t, y=x] {sac-racbf-noreset_xdot.txt}; 
            \addplot [name path=lower, fill=none, draw=none, forget plot]  table[x=t, y=c1] {sac-racbf-noreset_xdot.txt}; 
            \addplot [name path=upper, fill=none, draw=none, forget plot] table[x=t, y=c2] {sac-racbf-noreset_xdot.txt}; 
             \addplot[or!30, opacity=0.6, forget plot] fill between[of=lower and upper];
        \end{axis}
        
        \begin{axis}[
            height=3.5cm, width =(\columnwidth)/2,
            at={(7.5cm,-2cm)},
            xmin=0, xmax=280,
            ymin=-22, ymax=22,
            xlabel={$\text{Episode}$},
            ylabel={$\Delta \dot{\theta}$},
        ]
            \addplot[thick, azrablue] table[x=t, y=x] {sac_thetadot.txt}; 
            \addplot [name path=lower, fill=none, draw=none, forget plot]  table[x=t, y=c1] {sac_thetadot.txt}; 
            \addplot [name path=upper, fill=none, draw=none, forget plot] table[x=t, y=c2] {sac_thetadot.txt}; 
            \addplot[azrablue!30, opacity=0.6, forget plot] fill between[of=lower and upper];

            \addplot[thick, red] table[x=t, y=x] {sac-cbf_thetadot.txt}; 
            \addplot [name path=lower, fill=none, draw=none, forget plot]  table[x=t, y=c1] {sac-cbf_thetadot.txt}; 
            \addplot [name path=upper, fill=none, draw=none, forget plot] table[x=t, y=c2] {sac-cbf_thetadot.txt}; 
            \addplot[red!30, opacity=0.6, forget plot] fill between[of=lower and upper];

            \addplot[thick, gr] table[x=t, y=x] {sac-racbf_thetadot.txt}; 
            \addplot [name path=lower, fill=none, draw=none, forget plot]  table[x=t, y=c1] {sac-racbf_thetadot.txt};  
            \addplot [name path=upper, fill=none, draw=none, forget plot] table[x=t, y=c2] {sac-racbf_thetadot.txt}; 
            \addplot[gr!30, opacity=0.6, forget plot] fill between[of=lower and upper];

            \addplot[thick, or] table[x=t, y=x] {sac-racbf-noreset_thetadot.txt}; 
            \addplot [name path=lower, fill=none, draw=none, forget plot]  table[x=t, y=c1] {sac-racbf-noreset_thetadot.txt}; 
            \addplot [name path=upper, fill=none, draw=none, forget plot] table[x=t, y=c2] {sac-racbf-noreset_thetadot.txt}; 
             \addplot[or!30, opacity=0.6, forget plot] fill between[of=lower and upper];
        \end{axis}
    \end{tikzpicture}
\vspace{-2mm}
    \caption{The mean and standard deviation of the difference between the final state and desired target region, averaged over the training episodes of $5$ seeds. The proposed methods SAC-RACBF and SAC-RACBF-reset remain at $\Delta$ close to $0$, indicating they successfully reach the target region, while the baselines (SAC and SAC-CBF) are distributed throughout the state space.}
    \vspace{-0.5cm}
    \label{fig:final_delta_states}
\end{figure}
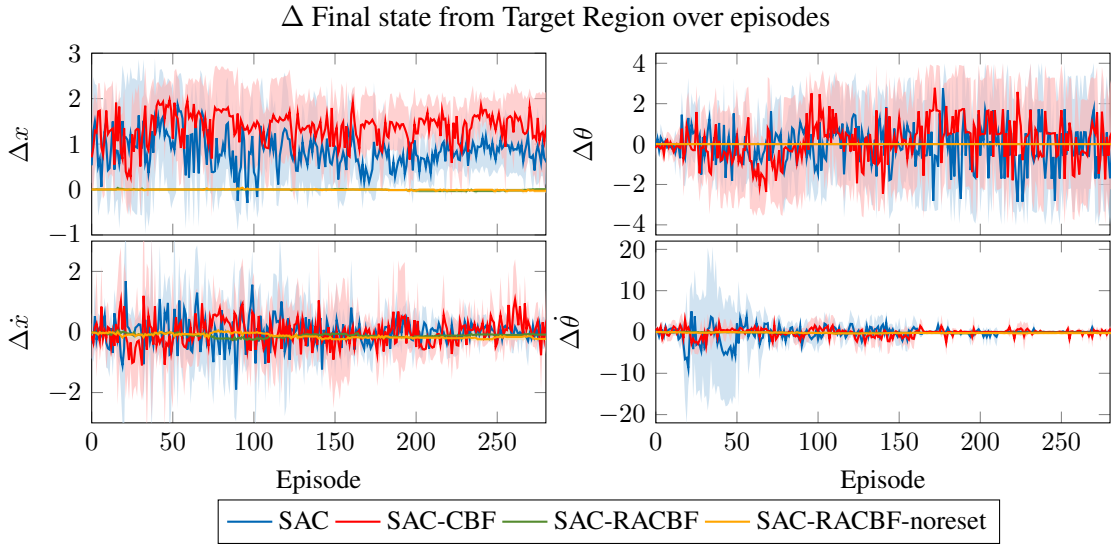

\vspace{-2mm}
\section{Conclusion} \label{sec:conclusion}
In this paper, we present a novel concept of reach-avoid VB-CBF that integrates CBFs with the HJR reach-avoid set. When combined with a safety filter, this approach prioritizes invariance with respect to the time-varying safe set while maintaining robustness to control and disturbance bounds. Moreover, we show that the Hamilton-Jacobi reach-avoid set is a reach-avoid VB-CBF. We motivate our approach with the promise of safe, efficient, hands-off RL for training robots in the real world. The effectiveness of our method is demonstrated through safe training and resetting in a cart-pole environment.
\AB{For future work, we plan to carry out a comparison of our approach with other methods and leverage DeepReach to approximate the reach-avoid value function for high-dimensional and partially unknown dynamics using online updates from a conservative initial guess.}
\section*{Acknowledgements}
This research was supported by ONR grants N00014-22-1-2292 and N00014-23-1-2353, the Naval Innovation, Science, Engineering Center (NISEC) at UC San Diego under grant N00014-23-1-2831, and NIBIB of the National Institutes of Health under grant T32EB009380.
\bibliography{references}

\end{document}